\newtheorem{prop}{Proposition}
\newtheorem{corollary}{Corollary}
\title{Deterministic contention management for low latency Cloud RAN over an optical ring}
\let\c@table\c@figure
\author[1]{Dominique Barth}
\author[1,2]{Ma\"el Guiraud}
\author[1]{Yann Strozecki}
\affil[1]{David Laboratory, UVSQ}
\affil[2]{Nokia Bell Labs France}
\begin{document}
\maketitle

\begin{abstract}
The N-GREEN project has for goal the design of a low cost optical ring technology with good performance (throughput, latency$\dots$) without using expensive end-to-end connections. We study the compatibility of such a technology with the development of the Cloud RAN, a latency critical application which is a major aspect of 5G deployment. We show that deterministically managing Cloud RAN traffic minimizes its latency while also improving the latency of the other traffics. 

\end{abstract}

\section{Introduction}

\footnote{This work was developed for the N-GREEN project. The authors thank the National Agency of Research (ANR) for partial funding in the frame of the N-GREEN project and the parteners of the project for fruitful discussions.} Telecommunication network providers have to design inexpensive networks supporting an increasing amount of data and online applications. Many of these applications require QoS guarantees, like minimal throughput and/or  maximal latency. The N-GREEN project aims to design a high performing optical ring while ensuring a minimal cost for providers. The current solutions with good QoS~\cite{pizzinat2015things,tayq2017real}, establish end-to-end direct connections between the nodes, which is extremely expensive. The N-GREEN optical ring, offering any-to-any connections, is designed to ensure good performance at low cost: beyond the advantages of WDM technology adopted, the hardware it requires scales linearly with the number of nodes while direct connection scales quadratically making it impractical for more than a few nodes. The WDM technology of the N-GREEN optical ring is different of existing technologies or protocols like SDH/SONET and DQBD~\cite{siller1996sonet,zukerman1990dqdb}.

In this article, we study a Cloud RAN (C-RAN) application based on the N-GREEN optical ring described in~\cite{ngreenarchitecture,uscumlic2018scalable}. C-RAN is one of the major area of development for 5G; it consists in centralizing or partially centralizing the computation units or {\bf BaseBand Units} (BBU) of the {\bf Remote Radio Heads} (RRH) in one datacenter~\cite{mobile2011c}. Periodically, each RRH in the field sends some uplink traffic to its associated BBU in the datacenter, then, after a computation, the BBU sends some downlink traffic back to the RRH. In this paper, we assume that the quantity of uplink and downlink traffic is the same. The latency of the messages between the BBU and the RRH is critical since some services need end-to-end latency as low as $1$ms~\cite{3gpp5g,boccardi2014five}.

Nowadays, the traffic is managed by statistical multiplexing~\cite{kern2006applying}. Here, we propose an SDN approach to {\bf deterministically} manage the periodic C-RAN traffic by choosing emission timing. Indeed, Deterministic Networking is one of the main method considered to reduce the end-to-end latency~\cite{finn-detnet-architecture-08}. In a previous work~\cite{dominique2018deterministic}, the authors have studied a similar problem for a star shaped network. In contrast with our previous work, finding emission timings so that different periodic sources do not use the same resource is easy in the context of the N-GREEN optical ring with a single data-center. However, we deal with two additional difficulties arising from practice: the messages from RRHs are scattered because of the electronic to optic interface and there are other traffics whose latency must be preserved. It turns out that the deterministic management of CRAN traffic we propose reduces the latency of CRAN traffic to the physical delay of the routes, while reducing the latency of the other traffics by smoothing the load of the ring over the period. To achieve such a good latency, our solution needs to reserve resources in advance, which slightly decreases the maximal load the N-GREEN optical ring can handle. Such an approach of reservation of the network for an application (CRAN in our context) relates to network slicing~\cite{jiang2016network} or virtual-circuit-switched connections in optical networks~\cite{cadere2010virtual,szymanski2016ultra}.

In Sec.~\ref{sec:model}, we model the optical ring and the traffic flow. In Sec.~\ref{sec:oportmethods}, we experimentally evaluate the latency when using stochastic multiplexing to manage packets insertion on the ring, with or without priority for C-RAN packets. In Sec.~\ref{sec:deterministicalgorithms}, we propose a deterministic way to manage C-RAN packets without buffers, which guarantees to have zero additional latency from buffering in the optical ring. We propose several refinements of this deterministic sending scheme to spread the load over time, which improves the latency of best effort packet, or in Sec.~\ref{sec:maxant}, to allow the ring to support a maximal number of antennas at the cost of a very small latency for the C-RAN traffic. 

\section{Model of C-RAN traffic over an optical ring}
\label{sec:model}
    
  \paragraph{N-GREEN Optical ring}
   
  The unidirectional optical ring is represented by an oriented cycle. The vertices of the cycle represent the nodes of the ring, where the traffic arrives. The arcs $(u,v)$ of the cycle have an integer weight $\omega(u,v)$ which represents the time to transmit a unit of information from $u$ to $v$. By extension, if $u$ and $v$ are not adjacent, we denote by $\omega(u,v)$ the size of the directed path from $u$ to $v$.  The \textbf{ring size} is the length of the cycle, that is $\omega(u,u)$ and we denote it by $RS$. A {\bf container}, of capacity $C$  expressed in bytes, is a basic unit of data in the optical ring. 
  
  The time is discretized: a unit of time corresponds to the time needed to fill a container with data.
  As shown in Fig.~\ref{fig:containers}, the node $u$ can fill a container with a data packet of size less than $C$ bytes at time $t$ if the container at position $u$ at time $t$ is \emph{free}. 
  If there are several packets in a node or if a node cannot fill a container, because it is not free, 
  the remaining packets are stored in the {\bf insertion buffer} of the node. 
  A container goes from $u$ to $v$ in $\omega(u,v)$ units of time. The ring follows a {\bf broadcast and select scheme with emission release policy}: When a container is filled by some node $u$, it is freed when it comes back at $u$ after going through the whole cycle.

  \begin{figure}[h!]

        \begin{center}
   \includegraphics[scale=0.65]{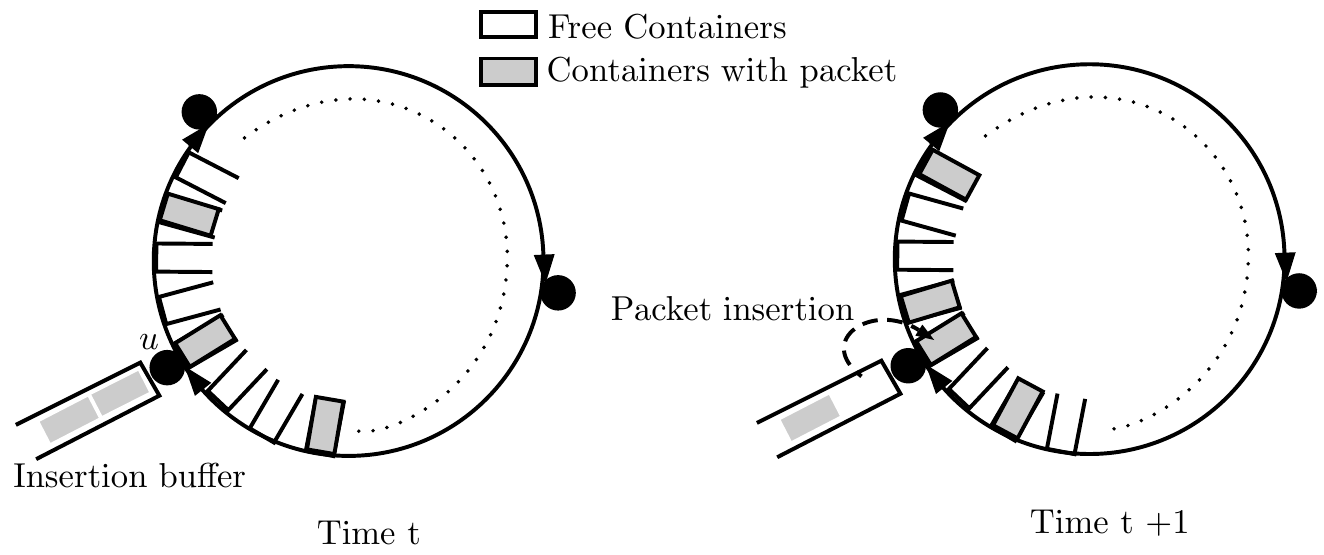}
   \end{center} 
     \caption{Dynamic behavior of the ring.}\label{fig:containers}
  \end{figure}

     \paragraph{C-RAN traffic}
   The RRHs are the source of the {\bf deterministic and periodic} C-RAN traffic.
   There are $k$ RRHs attached to the ring and several RRHs can be attached to the same vertex. An RRH is linked to a node of the ring through an electronic interface of bit rate $R$ Bps.
   The ring has a larger bit rate of $F\times R$ Bps. The integer $F$ is called the {\bf acceleration factor} between the electronic and the optical domains. A node aggregates the data received on the electronic interface during $F$ units of time to create a packet of size $C$ and then puts it in the insertion buffer.
  In each period $P$, an RRH emits data during a time called \textbf{emission time} or $ET$. Hence the RRH emits $ET / F$ packets, i.e. requires a container of size $C$ each $F$ units of time during the emission time, as shown in Fig.~\ref{fig:interface}.
   
   At each period, the data of the RRH $i$ begins to arrive in the insertion buffer at a time $m_i$  called {\bf offset}. The offsets can be determined by the designer of the system and can be different for each RRH but must remain the same over all periods. We assume that all BBUs are contained in the same data-center attached to the node $v$. The data from $u$ is routed to its BBU at node $v$ through the ring and arrives at time $m_i + \omega(u,v)$ if it has been inserted in the ring upon arrival. Then after some computation time, which w.l.o.g. is supposed to be zero, an answer is sent back from the BBU to the RRH. The same quantity of data is emitted by each BBU or RRH during any period.
   
  In this paper, the {\bf latency} of a data packet is defined as the time it waits in an insertion buffer. In other words, it is the logical latency into the optical ring.
   Indeed, because of the ring topology, the routes between RRHs and BBUs are fixed, thus we cannot reduce the physical transmission delay of a data which depends only on the size of the arcs used. Moreover, there is only one buffering point in the N-GREEN optical ring, the insertion buffer of the node at which the data arrives. Hence, in this context, to minimize the end-to-end delay, we need to minimize the (logical) latency.
   More precisely, we want to reduce the latency of the C-RAN traffic to \textbf{zero}, both for the RRHs (uplink) and the BBUs (downlink). In Sec.~\ref{sec:deterministicalgorithms} we propose a deterministic mechanism with zero latency for C-RAN which also improves the latency of other data going through the optical ring. We shortly describe the nature of this additional traffic in the next paragraph.
     
\begin{figure}[h!]
\begin{center}  
      \includegraphics[scale=0.7]{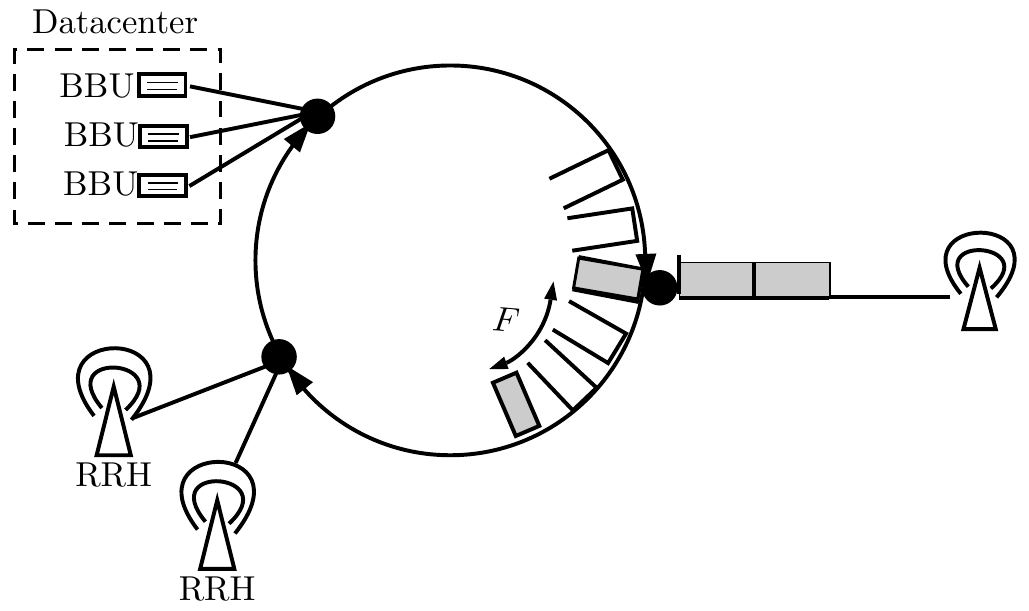}
     \caption{Insertion of C-RAN traffic in the N-GREEN optical ring.}\label{fig:interface}
\end{center}
  \end{figure}

\paragraph{Best effort traffic}
The optical ring supports other traffics, corresponding to the internet flow. We call this traffic \textbf{Best Effort} (BE). We want it to have the best possible distribution of latency, but since BE traffic is less critical than C-RAN traffic, we impose no hard constraint on its latency. At each node of the ring, a {\bf contention buffer} is filled by a batch arrival process of BE data. 
This batch arrival process consists in generating, at each unit of time, a quantity of data drawn from a bimodal distribution to modelize the fact that internet traffic is bursty. Then, according to the fill rate of the contention buffer and the maximum waiting time of the data, a packet of size at most $C$ may be created by aggregating data in the contention buffer. This packet is then put in the insertion buffer of the node. Hence, the arrival of BE messages can be modeled by a temporal law that gives the distribution of times between two arrivals of a BE packet in the insertion buffer. The computation of this distribution for the parameters of the contention buffer used in the N-GREEN optical ring is described in~\cite{Cast1810:Performance}. We use this distribution in our experiments to modelize BE packet arrival in the insertion buffer.

   \section{Evaluation of the latency on the N-GREEN optical ring}
   \label{sec:oportmethods}

  We first study the latency of the C-RAN and BE traffics when the ring follows an opportunistic insertion policy: When a free container goes through a node, it is filled with a packet of its insertion buffer, if there is one.
 Two different methods to manage the insertion buffer are experimentally compared. First, the \textbf{FIFO} rule, which consists in managing the C-RAN and BE packets in the same insertion buffer. Then, when a free container is available, the node fills it with the oldest packet of the insertion buffer, without distinction between C-RAN and BE. This method is compared to a method called \textbf{C-RAN priority} that uses two insertion buffers: one for the BE packets, and another for the C-RAN packets. The C-RAN insertion buffer has the priority and is used to fill containers on the ring while it is non empty before considering the BE insertion buffer.  
 
We compare experimentally these two methods in the simplest topology: The lengths of the arcs between nodes are equal and there is one RRH by node. The experimental parameters are given in Table~\ref{fig:params} and chosen following~\cite{ngreenarchitecture}. In each experiment, the offsets of the RRHs are drawn uniformly at random in the period. The results are computed over $1,000$ experiments in which the optical ring is simulated during $1,000,000$ units of time. Fig.~\ref{fig:resultopport} gives the cumulative distribution of both C-RAN and BE traffics latencies for the FIFO and the C-RAN priority methods. The source code in C of the experiments can be found on one of the authors' webpage~\cite{webpage}.

\begin{center}
 \includegraphics[scale=0.35]{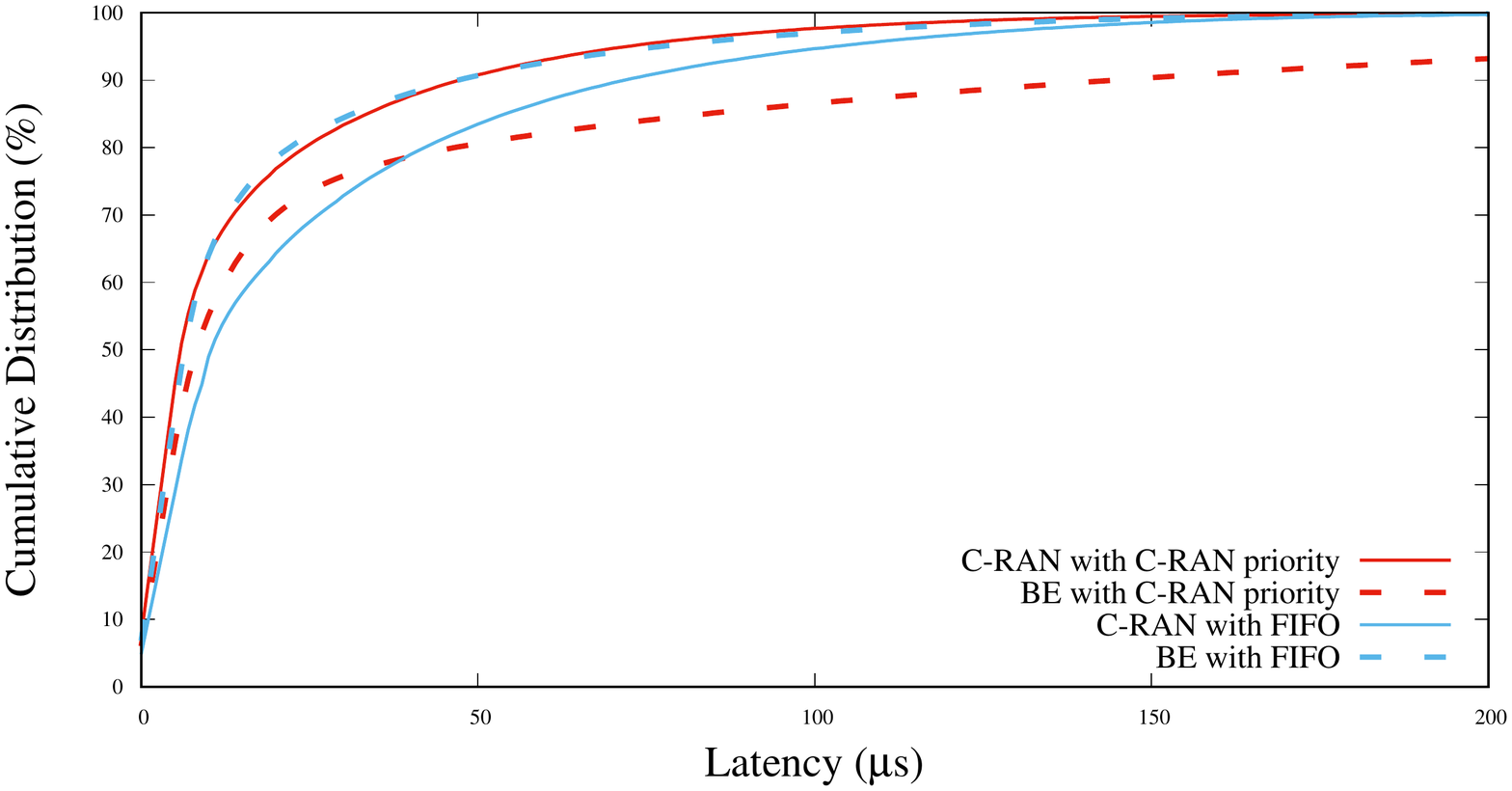}

     \captionof{figure}{Distribution of latencies for FIFO and C-RAN first.}    \label{fig:resultopport}

\begin{minipage}[c]{.45\linewidth}

 \scalebox{.57}
  {
  \begin{tabular}{|c|c|}
  \hline
  Bit rate of an electronic interface $R$ & $10$ Gbps \tabularnewline
  \hline
  Optical ring bit rate $F\times R$ & $100$ Gbps \tabularnewline
  \hline
    Acceleration factor $F$ & $10$  \tabularnewline
  \hline
  Container size  $C$ & $100$ kb  \tabularnewline
  \hline
  Unit of time (UoT) $C/(F\times R)$ & $1~\mu$s \tabularnewline
  \hline
  Length traveled during one UoT & $200$ m \tabularnewline
  \hline
    \end{tabular}
  }
  \end{minipage} 
  \hspace{0.1cm}
  \begin{minipage}[c]{.45\linewidth}
   \scalebox{.57}
  {
  \begin{tabular}{|c|c|}
  \hline
  Time to go through the cycle $RS$ & $100$ UoT \tabularnewline
  \hline
  Emission time $ET$ & $500$ UoT \tabularnewline
  \hline
   Period $P$ & $1,000$ UoT \tabularnewline
  \hline
  Number of RRH & $5$  \tabularnewline
  \hline
  Number of nodes $k$ & $5$  \tabularnewline
  \hline
   Load induced by C-RAN traffic & $50\%$  \tabularnewline
  \hline
    Load induced by BE traffic & $40\%$  \tabularnewline
  \hline
  \end{tabular}
}
 
\end{minipage}
\captionof{table}{Parameters of the N-GREEN architecture.}\label{fig:params}
\end{center}
Unsurprisingly, the latency of the C-RAN traffic is better when we prioritize the C-RAN messages, while the BE traffic is heavily penalized. Furthermore, there is still $10\%$ of the C-RAN traffic with a latency higher than $50 \mu$s, a problem we address in the next section.


Remark that, due to the broadcast and select mode, a message coming from any node induces the same load for all the nodes of the ring. Hence the latency of the traffics coming from any RRHs or from the BBUs are the same, which may seem couterintuitive knowing that all BBUs share the same node on the ring. This is why in Fig.~\ref{fig:resultopport} we do not ditinguish between uplink C-RAN traffic (RRH to BBU) and downlink  C-RAN traffic (BBU to RRH).

\section{Deterministic approach for zero latency} \label{sec:deterministicalgorithms}
\subsection{Reservation}
Finding good offsets for the C-RAN traffic is a hard problem even for simple topologies and without BE traffic, see~\cite{dominique2018deterministic}. In this section, we give a simple solution to this problem in the N-GREEN optical ring, and we adapt it to minimize the latency of the BE traffic.

Let $u$ be the node to which is attached the RRH $i$. To ensure zero latency for the C-RAN traffic, then the container which arrives at $u$ at time $m_i$ must be free so that the data from the RRH can be sent immediately on the optical ring. 

To avoid latency between the arrival of the data from the RRH and its insertion on the optical ring, 
we allow nodes to \textbf{reserve} a container one round before using it. A container which is reserved cannot be filled by any node except the one which has reserved it (but it may not be free when it is reserved). 
If $u$ reserves a container at time $m_i - RS$, then it is guaranteed that $u$ can fill a free container at time $m_i$ with the data of the RRH $i$.
In the method we now describe, the C-RAN packets never wait in the node: The message sent by the RRH $i$ arrives at its BBU at node $v$ at time $m_i + \omega(u,v)$ and the answer is sent from the BBU at time $m_i + \omega(u,v) +1$.

Recall that an RRH fills a container every $F$ units of time, during a time $ET$. 
Thus if we divide the period $P$ into \textbf{slots} of $F$ consecutive units of time, an RRH needs to fill at most one container each slot. If an RRH emits at time $m_i$, then we say it is at \textbf{position} $m_i + \omega(u,v)\pmod F$.
The position of an RRH corresponds to the position in a slot of the container it has emitted, when it arrives at $v$, the node of the BBU. 
If an RRH is at position $p$, then by construction, the corresponding $BBU$ is at position $p+1\pmod F$. For now, we do not allow waiting times for C-RAN traffic, hence each RRH uses a container at \emph{the same position during all the emission time}. 

Given a ring, a set of RRH's, a period and an acceleration factor $F$, the problem we solve here is to find an \textbf{assignment} of values of the offsets $m_i$'s which is \textbf{valid}: two RRHs must never use the same container in a period. Moreover we want to preserve the latency of the BE traffic. It means that the time a BE packet waits in the insertion buffer must be minimized. To do so, we must minimize the time a node waits for a free container at any point in the period, by spreading the C-RAN traffic as uniformly as possible over the period. 

\begin{center}   
      \includegraphics[scale=0.65]{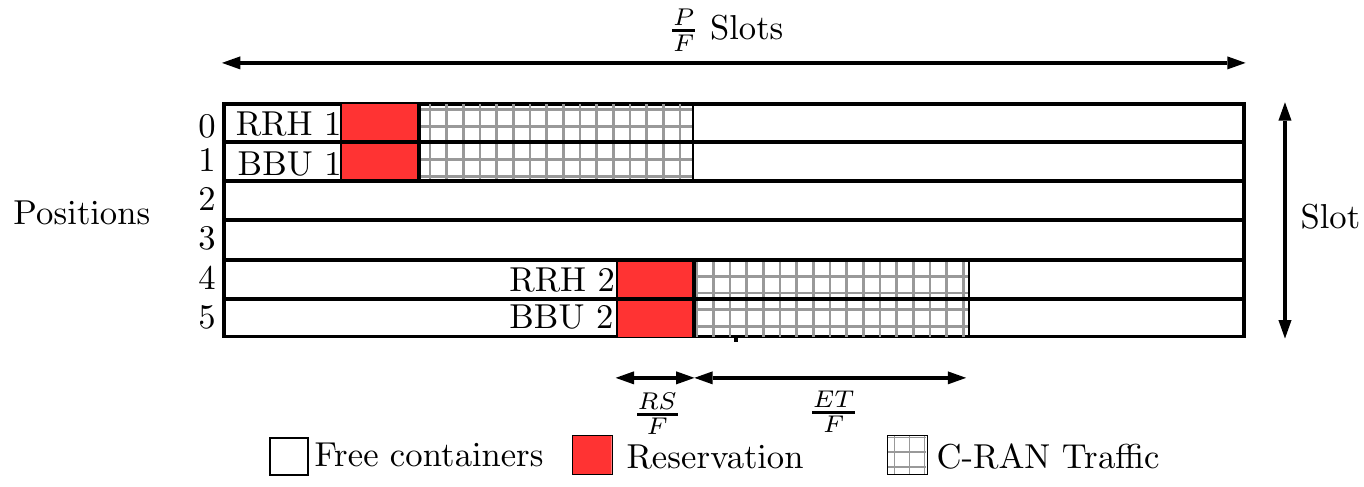}
     \captionof{figure}{A valid assignment with $F = 6$.}\label{fig:assignment}
\end{center}

Fig.~\ref{fig:assignment} represents an assignment of two couples of RRH and BBU by showing the containers going through the node of the BBU during a period. Each slot has a duration of $F$ unit of times, and, since an RRH/BBU emits a packet each $F$ UoT during $ET$ UoT, if we take the granularity of a slot to represent the time, the emission of a BBU/RRH is continuous in our representation, during $ET/F$ slots. A date $t$ in the period corresponds in Fig.~\ref{fig:assignment} to the slot $t/F$ and is at position $t \mod F$.

 \subsection{Building valid assignment with zero C-RAN latency}\label{sec:zerolatency}
Remark that two RRHs which are not at the same position never use the same containers. Moreover, if we fix the offsets of the RRHs to even positions so that they do not reserve the same containers, then, because the answers of the BBU are sent without delay in our model, it will fix the offsets of the BBUs to odd positions which do not reserve the same containers. Hence, we need to deal with the RRHs only.
The next proposition gives a simple method to find an assignment.

\begin{prop}
\label{prop:assign}
There is a valid assignment of the offsets $m_1, \dots, m_k$ on the same position if  $k ET + RS \leq P$.
\end{prop}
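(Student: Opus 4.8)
The plan is to exhibit one explicit assignment --- the $k$ RRHs laid out back to back inside the period --- and to check that it is valid, the term $RS$ in the hypothesis being exactly the slack required where the period wraps around.

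I would first invoke the reductions already made above: it suffices to pick the offsets of the RRHs, since the BBU offsets are then determined, sit one position further along, inherit the same back-to-back layout and are hence conflict-free as well; and since two RRHs on different positions never share a container, I may take all $k$ RRHs on one common position. I then fix the offsets so that, reading all dates at the BBU node $v$, the $ET/F$ containers used by RRH $i$ arrive exactly during the interval $I_i := [(i-1)ET,\,iET)$ --- that is, $m_i := (i-1)ET-\omega(u_i,v) \bmod P$ --- which is legitimate, and puts every RRH on the common position $0$, because $F\mid ET$. Since $kET\le P$, the intervals $I_1,\dots,I_k$ are pairwise disjoint and fit inside one period without wrapping.

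It then remains to see that this layout lets no container be contended. The heart of the matter is a short modular computation, still reading dates at $v$: a physical container has round-trip time $RS$, so a container that RRH $i$ uses for the last time at a date $\ell\in I_i$ is back, and free, at the date $\ell+RS$; because the blocks abut, $\ell+RS$ is precisely the first date at which the next RRH using that container uses it, so consecutive RRHs hand a container over with no idle time between their blocks. The only place a container can still be busy when the next RRH wants it is the wrap-around from RRH $k$ to RRH $1$: RRH $k$'s last use of a container occurs at a date strictly below $kET$, so the container is free strictly before date $kET+RS$, hence --- by the hypothesis $kET+RS\le P$ --- strictly before date $P$, i.e. strictly before RRH $1$ of the next period needs it. (The same margin also lets RRH $1$ place its one-round-ahead reservation in time, so the zero-latency mechanism of the section is preserved.) Thus every two RRHs that share a container use it in windows separated by at least a full round, so no two RRHs ever occupy the same container simultaneously: the assignment is valid.

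The step I expect to cause the most trouble is making the hand-off computation watertight while carrying the various path lengths $\omega(u_i,v)$ through it, and, above all, settling the boundary conventions --- a container freed in the very unit of time in which the next node fills it, and the coexistence within one unit of time of RRH $i$ filling a container and the next user reserving it for the following round. Everything else --- the explicit offsets, the disjointness of the $I_i$, and the arithmetic producing the separation $RS$ --- is routine once this bookkeeping is fixed, and it is precisely that bookkeeping which singles out the inequality $kET+RS\le P$ rather than a more demanding one such as $k(ET+RS)\le P$.
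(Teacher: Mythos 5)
Your overall strategy is the paper's: exhibit the back-to-back schedule explicitly and let the hypothesis $kET+RS\le P$ absorb the one round trip needed at the wrap-around. But there is a genuine gap in the validity check, and it sits exactly where you deferred the ``bookkeeping''. Anchoring all dates at the BBU node $v$ and arguing only from the abutting of the busy windows $I_i$ read at $v$ does not rule out conflicts: under the emission-release policy a container filled by RRH $i$ at $u_i$ stays occupied for a \emph{full} round trip, until it returns to $u_i$, and in particular it is still occupied \emph{after} it has passed $v$. So if the node $u_{i+1}$ of the next RRH in your schedule lies beyond $v$, i.e.\ $v$ is on the arc path from $u_i$ to $u_{i+1}$, then the last containers filled by RRH $i$ reach $u_{i+1}$ still occupied, precisely at fill dates of RRH $i+1$: the fill of RRH $i$ that is busy at $v$ at date $\ell=iET-F$ reaches $u_{i+1}$ at the date whose $v$-reading is $\ell+RS$, which lies in $I_{i+1}$ as soon as $RS\le ET$ and $F\mid RS$. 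Concretely, take $RS=4$, $F=2$, $ET=4$, $P=12$, $k=2$, $v$ at position $0$, $u_1$ at position $3$, $u_2$ at position $1$: your offsets give $m_1=11$, $m_2=1$, and the container RRH $1$ fills at date $13$ is exactly the one RRH $2$ must fill at date $15$, while it is still occupied. Your hand-off claim (``$\ell+RS$ is precisely the first date at which the next RRH using that container uses it'') is therefore false in general; the two events can coincide at $v$ modulo $RS$ while the actual fill at $u_{i+1}$ happens one full round too early.

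The missing ingredient is not arithmetic but the \emph{ordering} of the RRHs, which you never fix. The paper indexes $u_1,\dots,u_k$ in the order of the oriented cycle and anchors the schedule at the first RRH node rather than at $v$, setting $m_i=(i-1)ET+\omega(u_1,u_i)$; each RRH then starts emitting at its own node only after the last occupied container of the previous RRH has passed that node, the conflict equation $s'=s+\omega(u_i,u_j)$ has no solution for any pair of fills, and the hypothesis $kET+RS\le P$ handles the single remaining wrap-around into the next period. Your construction can be repaired by taking the indexing to be the cyclic order starting just after $v$ (so that $v$ separates only $u_k$ from $u_1$, where the $RS$ slack is available), but as written the assignment is invalid for other orderings, and the verification must in any case be carried out at each node $u_j$, not at $v$.
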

\begin{proof}
 W.l.o.g we fix $m_1$ to $0$ and all the other offsets will then be chosen at position $0$.  Let $u_1,\dots,u_k$ be the nodes attached to the RRHs $1,\dots,k$. We assume that $u_1,\dots,u_k$ are in the order of the oriented cycle. The last message emitted by the RRH $1$ arrives at $u_2$ at time $ET - 1 + \omega(u_1,u_2)$. Therefore we can fix $m_2 =  ET  + \omega(u_1,u_2)$. In general we can set $m_i = (i-1) \times ET + \omega(u_1,u_i)$ and all RRHs will use different containers at position $0$ during a period. Since $k \times ET + \omega(u_1,u_1) \leq P$ by hypothesis,
 the containers filled by the $k$th RRH are freed before $P$. Hence when the RRH $1$ must emit something at the first unit of time of the second period, there is a free container.
\end{proof}

Remark that reserving free containers make them unusable for BE traffic which is akin to a loss of bandwidth. However, with our choice of emission times of the RRHs in the order of the cycle, most of the container we reserve are used by the data from some RRH. If all containers at some position are used, that is $kET +RS = P$, then there are only $RS$ free containers wasted. In the worst case, less than $2RS$ containers are wasted by the assignment of Prop.~\ref{prop:assign}. 

It is now easy to derive the maximal number of antennas which can be supported by an optical ring, when using reservation and the same position for an RRH for the whole period.

\begin{corollary}
There is a valid assignment with $ \lfloor\frac{P- RS}{ET}\rfloor \times \frac{F}{2}$ antennas and zero latency.
\end{corollary}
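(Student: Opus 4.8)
The plan is to run the construction of Proposition~\ref{prop:assign} once on each even position and then multiply. Recall the two facts established just before Proposition~\ref{prop:assign}: two RRHs sitting on different positions modulo $F$ can never reserve or fill a common container, and as soon as every RRH offset is chosen on an even position, the corresponding BBU offsets all land on odd positions, where they are in turn pairwise non-conflicting. So among the $F$ positions, the $F/2$ even ones are free to host RRHs, the $F/2$ odd ones being used by the matching BBUs, and an assignment is valid precisely when it is valid on each position taken in isolation.

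First I would fix one even position. By Proposition~\ref{prop:assign}, offsets of the form $m_i = (i-1)ET + \omega(u_1,u_i)$ give a valid placement of $k$ RRHs, all on one and the same position, whenever $kET + RS \le P$; the largest admissible value is $k = \lfloor (P-RS)/ET \rfloor$. Translating all of these offsets by a common constant $c \in \{0,\dots,F-1\}$ is just a cyclic time shift of the whole period-$P$ schedule, hence preserves validity while relocating the entire group to the position shifted by $c$; choosing $c$ suitably parks the group on any prescribed even position.

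Then I would perform this independently on each of the $F/2$ even positions, obtaining $\lfloor (P-RS)/ET \rfloor$ RRHs per position. Since groups living on distinct positions are automatically conflict-free, their union is a valid assignment, and it comprises $\lfloor (P-RS)/ET \rfloor \times \frac{F}{2}$ antennas. The zero-latency claim requires nothing new: it is exactly the property delivered by the reservation scheme of Section~\ref{sec:zerolatency}, where a node reserves, one round ($RS$ units of time) ahead, the container it will fill, so that each C-RAN packet enters the ring at the very instant it reaches its insertion buffer.

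The argument is essentially bookkeeping, and the only point worth checking is that ``choosing a position'' is free of cost: shifting a whole position-group in time must keep its containers aligned on slot boundaries and still freed within the period. This is fine because $ET$, $P$, $RS$ and the relevant path lengths are all multiples of $F$ in the N-GREEN setting, and a cyclic shift of a periodic schedule cannot break the budget $kET + RS \le P$. Beyond that, the corollary is just the product count $\lfloor (P-RS)/ET \rfloor \times (F/2)$.
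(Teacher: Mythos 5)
Your proposal is correct and follows essentially the same route as the paper: apply Proposition~1 to pack $\lfloor (P-RS)/ET\rfloor$ RRHs per position, observe that each such group consumes one even position for the uplink and the adjacent odd position for the BBU answers, and multiply by the $F/2$ available pairs. Your version merely makes explicit the even/odd placement and the harmless cyclic shift that the paper leaves implicit.
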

\begin{proof}
Following Prop.~\ref{prop:assign}, the maximal number of antennas for which there is an assignment on the same position is $k = \lfloor\frac{P- RS}{ET}\rfloor $.
In such an assignment, we need a second position to deal with the traffic coming from the BBUs coming back to those k antennas. Since we got  $F$ positions in the slot, the number of antennas supported by the ring is thus equal to $k \times \frac{F}{2}$.
\end{proof}

With the parameters of the N-GREEN ring given in Table.~\ref{fig:params}, we can support $5$ antennas, while stochastic multiplexing can support $10$ antennas albeit with extreme latency. There are two sources of inefficiency in our method. The first comes from the reservation and cannot be avoided to guarantee the latency of the C-RAN traffic. The second comes from the fact that an RRH must emit at the same position during all the emission time (to guarantee zero latency). We relax this constraint in Sec.~\ref{sec:maxant} to maximize the number of antennas supported by the ring, while minimizing the loss of bandwith due to reservation.

We now present an algorithm using reservation as in Prop.\ref{prop:assign} to set the offsets of several RRHs at the same position. In a naive assignment, we put each RRH in an arbitrary position, for instance one RRH by position. We then propose three ideas to optimize the latency of the BE traffic, by spacing as well as possible the free containers in a period.

\paragraph{Balancing inside the period}

With the parameters of the N-GREEN ring given in Table.~\ref{fig:params} ($ET = \frac{P}{2}$, $F = 10$ and $n = 5$), there are no unused position. Any assignment has exactly one  BBU or RRH at each position. If all the RRHs start to emit at the first slot, then during $ET$ there will be no free container anywhere on the ring, inducing a huge latency for the BE traffic. 
To mitigate this problem, in a period, the time with free containers in each position must be uniformly distributed over the period as shown in Fig.~\ref{fig:periodbal}.

\begin{center}
      \includegraphics[scale=0.6]{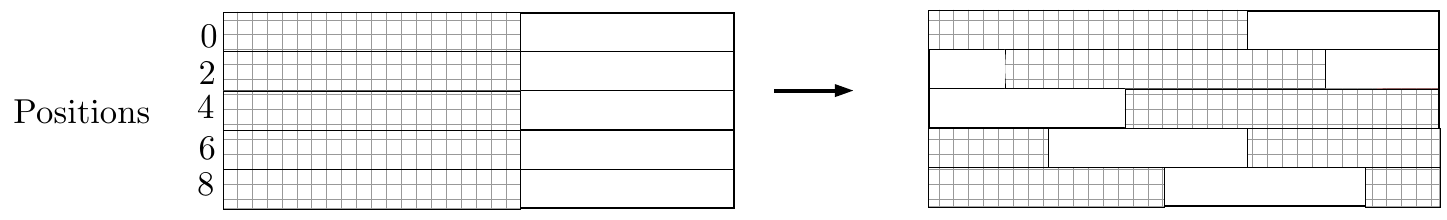}
     \captionof{figure}{Balancing inside the period.}\label{fig:periodbal}
\end{center}

\paragraph{Compacting positions}

For each position which is used by some RRH, and for each period, at least $RS$ free containers are reserved which decreases the maximal load the system can handle. Therefore to not waste bandwidth, it is important to put as many RRHs as possible on the same position as shown in Fig.~\ref{fig:packing}. Indeed, for any position which is not used at all, no container needs to be reserved. This strategy is also good to spread the load during the period since it maximizes the number of unused positions and for each unused position there is a container free of C-RAN traffic each $F$ unit of times. 
    
\begin{center}
      \includegraphics[scale=0.6]{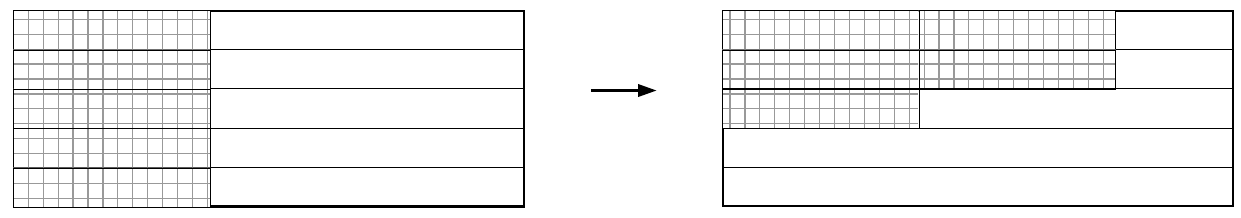}
     \captionof{figure}{Compacting positions.}\label{fig:packing}
\end{center}

\paragraph{Balancing used positions}

The free positions can be distributed uniformly over a slot, to minimize the time to wait before a node has access to a container from a free position, as shown in Fig.~\ref{fig:slotbal}. To do so, compute the number of needed positions $x = \lceil k\times \frac{ET}{P - RS}\rceil$, with k the number of antennas using the previous strategy. Then, set the $x$ used positions in the following way: $\lfloor\frac{F}{x}\rfloor -1 $ free positions are set between each used positions. If $\frac{F}{x}$ has a reminder $r$, then we set the $r$ free remaining positions uniformly over the interval in the same way and so on until there are no more free position. It is a small optimization, since it decreases the latency by at most $F/2$.
\begin{center}
   \includegraphics[scale=0.6]{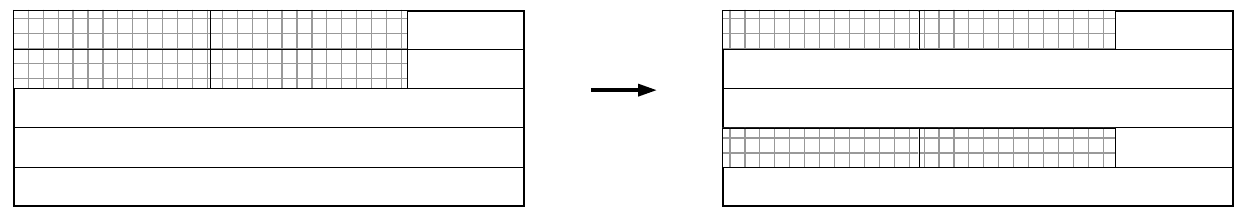}
     \captionof{figure}{Balancing used positions.}\label{fig:slotbal}
     \end{center}
  \paragraph{Experimental evaluation}

  Our algorithm \emph{combines the three methods} we have described to spread the load over the period.
  In order to understand the interest of each improvement, we present the cumulative distribution of the latency of the BE traffic using them either alone or in conjunction and we compare our algorithm to stochastic multiplexing with C-RAN priority.

  \begin{center}
    \includegraphics[scale=0.3]{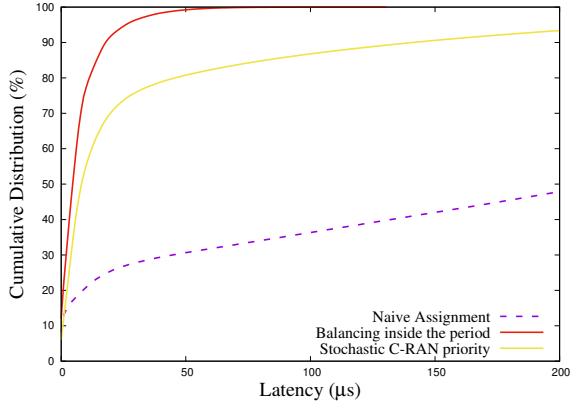}
     \captionof{figure}{BE latencies of a naive assignment and balancing inside the period for $5$ antennas.}  
     \label{fig:periodonly}
     \end{center}

Fig.~\ref{fig:periodonly} shows the performance of balancing the C-RAN traffic inside the period against a naive assignment in which all the RRH begin to emit at the same slot. We keep the same parameters as in Sec.~\ref{sec:oportmethods} (see Table~\ref{fig:params}). As expected, the BE traffic latency is much better when we balance the C-RAN traffic inside the period and already much better than stochastic multiplexing.

To show the interest of compacting the positions, we must be able to put several RRHs at the same position.
Hence, we change the emission time to $ET = 200$ and the number of antennas to $k = 12$ to keep the load around $90\%$ as in the experiment of Fig.~\ref{fig:resultopport}. This is not out of context since the exact split of the C-RAN (the degree of centralization of the computation units in the cloud) is not fully determined yet~\cite{mobile2011c}.


As shown in Fig.\ref{fig:algocmp}, the performance of the naive assignment is really bad. Compacting the RRHs on a minimal number of positions decreases dramatically the latency. If in addition, we balance over a period, we get another gain of latency of smaller magnitude: the average (respectively maximum) latency for BE traffic goes from $4.76 \mu$s (resp. $48 \mu$s) to $3.28 \mu$s (resp. $37 \mu$s).
We did not represent the benefit of balancing used positions because the reduction in latency it yields is small as expected: the average (respectively maximum) latency for BE traffic goes from $4.76 \mu$s (resp. $48 \mu$s) to $4.43 \mu$s (resp. $44 \mu$s).

In Fig.~\ref{fig:optimres}, we compare the cumulative distribution of the latency of the BE traffic using the FIFO rule to our reservation algorithm with the three proposed improvements. The parameter are the same as in the previous experiment. The performance of our reservation algorithm is excellent, since the C-RAN traffic has \emph{zero latency} and the BE traffic has a \emph{better latency} than with the FIFO rule despite the cost of reservation. It is due to the balancing of the load of the C-RAN traffic over the period, that guarantee a more regular bandwidth for the BE traffic.

\begin{center}
 \includegraphics[scale=0.3]{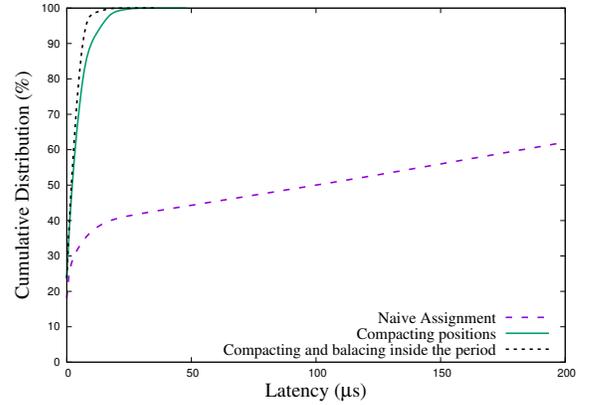}
     \captionof{figure}{BE latencies of compacting positions and balancing inside the period for $12$ antennas.}   \label{fig:algocmp}
     \end{center}

  \begin{center}
     \includegraphics[scale=0.3]{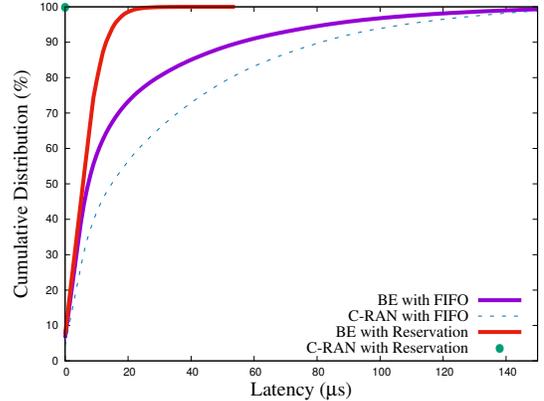}
     \captionof{figure}{FIFO buffer compared to the best method with reservation for $12$ antennas.} \label{fig:optimres}
     \label{fig:periodonly}
     \end{center}

\subsection{Building valid assignment with some C-RAN latency}
\label{sec:maxant}

The previous approach limits the number of antennas supported by the ring when $P-RS \mod ET \neq 0$, which is the case with N-GREEN parameters. The method we present in this section enables us to support more antennas and improves the latency of BE traffic (it reserves less free containers) by \emph{allowing the data from an RRH to use two positions}.
It is at the cost of a slightly worse latency for C-RAN traffic and it also requires in practice to implement some buffering for the C-RAN packets. 

In order to support as much antennas as possible on the ring, we use \emph{all} containers in a given position, improving on the compacting position heuristic. 

\begin{prop}\label{prop:filling}
 There is a valid assignment for $k$ antennas when $k \leq \lfloor \frac{P- RS}{ET} \times \frac{F}{2}\rfloor$.
\end{prop}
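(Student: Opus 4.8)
The plan is to bootstrap from Proposition~\ref{prop:assign}. As before it is enough to schedule the RRHs among the $\frac{F}{2}$ even positions; the induced BBU schedule on the odd positions then follows by the even/odd argument at the start of Section~\ref{sec:zerolatency}. Write $q=\lfloor\frac{P-RS}{ET}\rfloor$ and $\rho=(P-RS)\bmod ET$, so that $\frac{P-RS}{ET}\cdot\frac{F}{2}=q\frac{F}{2}+\frac{\rho}{ET}\cdot\frac{F}{2}$ and, since $q$ and $\frac{F}{2}$ are integers, $\lfloor\frac{P-RS}{ET}\cdot\frac{F}{2}\rfloor=q\frac{F}{2}+\lfloor\frac{\rho F}{2ET}\rfloor$. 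The construction in the proof of Proposition~\ref{prop:assign} (the offsets $m_i=(i-1)ET+\omega(u_1,u_i)$) places $q$ RRHs on a single position with zero latency in such a way that every container their reservations and data ever occupy, measured at the node $v$ of the BBUs, lies in one cyclic interval of length at most $qET+RS$; hence each position carries, in every period, a free cyclic window of length at least $P-qET-RS=\rho$. The first step is to fill each of the $\frac{F}{2}$ even positions with $q$ such RRHs, which already accounts for $q\frac{F}{2}$ antennas.

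The new ingredient is to recover the leftover space: I would concatenate the $\frac{F}{2}$ free windows of length $\rho$ into a single cyclic track of length $\frac{F}{2}\rho$ and schedule in it $\lfloor\frac{\rho F}{2ET}\rfloor$ additional RRHs, each requiring a block of $ET$ consecutive time units. A block falling inside one window corresponds to an ordinary extra RRH on that position; a block straddling the boundary between the windows of two consecutive even positions $p$ and $p+2$ corresponds to an RRH that emits its first packets on position $p$ and the remaining ones on position $p+2$ — this is precisely where the ``two positions'' and the extra C-RAN latency come from, since at the switch the RRH must hold a packet in its buffer, but only for a bounded number of units (at most one slot, i.e.\ $O(F)$), so the induced C-RAN latency is small and independent of $P$. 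Summing, the total number of antennas is $q\frac{F}{2}+\lfloor\frac{\rho F}{2ET}\rfloor=\lfloor\frac{P-RS}{ET}\cdot\frac{F}{2}\rfloor$, the claimed bound, and the downlink is handled identically, a straddling RRH inducing a straddling BBU on the odd positions $p+1,p+3$.

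What remains, and what I expect to be the crux, is checking that the resulting assignment is \emph{valid}, i.e.\ that no two RRHs (nor an RRH and itself across the period wrap) ever request the same container. Within one position the $q$ ``full'' RRHs are disjoint by Proposition~\ref{prop:assign}, and the extra RRHs live in windows that are pairwise disjoint by construction; so the only genuinely new interactions are, for a straddling RRH, with the $q$ full RRHs of each of its two positions and with the wrap of the period, and these must be ruled out by tracking the reservation interval (one round ahead) and the return interval (one round behind) of each of its containers and showing they stay inside the free window — which is exactly why that window was taken of length $\rho=P-qET-RS$ and not $P-qET$, the extra $RS$ absorbing the return of the last reserved container. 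One wrinkle to address is the regime $\rho<ET/2$, where an extra block of length $ET$ overlaps more than two consecutive free windows: then ``two positions'' should be read as ``a bounded number of consecutive even positions'', the count is unchanged, and the buffering stays bounded; for the N-GREEN parameters $\rho/ET=0.8$, so every extra RRH uses exactly two positions.
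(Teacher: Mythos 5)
Your overall strategy is the right one and matches the paper's in spirit: each even position provides $P-RS$ usable time units per period, an RRH may split its $ET$ units of emission across two (or more) positions at the cost of a few units of buffering, and the count $\lfloor\frac{P-RS}{ET}\cdot\frac{F}{2}\rfloor$ follows from packing $ET$-blocks into a track of total length $\frac{F}{2}(P-RS)$. Your arithmetic $q\frac{F}{2}+\lfloor\frac{\rho F}{2ET}\rfloor=\lfloor\frac{P-RS}{ET}\cdot\frac{F}{2}\rfloor$ is correct. But there is a concrete gap in the step where you ``concatenate the $\frac{F}{2}$ free windows into a single cyclic track.'' If you place $q$ RRHs on \emph{each} even position ``as in Proposition~\ref{prop:assign},'' every position's busy block is anchored at the start of the period, so all $\frac{F}{2}$ leftover windows occupy the \emph{same} real-time interval $[qET+RS,\,P)$: they are parallel in time, not consecutive. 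A source must produce one container's worth of data every $F$ units of real time; a block of $ET$ consecutive track-cells that ``straddles'' from the window of position $p$ into the window of position $p+2$ is then physically meaningless, because position $p+2$'s window is not later than position $p$'s --- the RRH would have to buffer for up to almost a full period, not $O(F)$, or collide with the $q$ RRHs already on $p+2$. The concatenation only becomes realizable if you additionally stagger the starting offset of the $q$-RRH block on position $p+2$ by $\rho$ relative to position $p$ (which Proposition~\ref{prop:assign}'s translation-invariance permits), and this staggering is exactly the missing ingredient that makes your ``at most one slot of buffering per switch'' claim true. You also explicitly defer the validity check, which is where this issue would have surfaced.

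The paper's construction avoids the problem by building the schedule sequentially rather than in two phases: it exhausts position $0$ first (its $l=\lfloor\frac{P-RS}{ET}\rfloor$ full RRHs plus the beginning of the $(l+1)$th), lets the overflowing RRH resume on position $2$ two time units later, starts the next RRH on position $2$ exactly where that spill ends, and so on. Each position's busy interval thus begins, in real time, where the previous position's ends, so consecutive free capacity really is consecutive in time and the per-switch delay is genuinely $2$ units. A side benefit of the sequential version, which your two-phase packing loses, is that for $k$ below the maximum only $\lceil k\cdot\frac{ET}{P-RS}\rceil$ positions are touched at all, so fewer containers are wasted on reservation and more positions remain entirely free for best-effort traffic --- this is the ``saturating positions'' property the paper exploits afterwards.
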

\begin{proof}
 We consider the RRHs in the order of the ring.
 Let $l = \lfloor \frac{P- RS}{ET}\rfloor$, then we set the offsets of the first $l$ RRHs as in Prop.~\ref{prop:assign}. These RRHs are at position zero and the $(l+1)$th RRH first emits at position zero, with offset $m_{l+1} = l*ET + \omega(u_0,u_{l+1})$. 
 
 The $(l+1)$th RRH emits up to time $P - \omega(u_{l+1},u_{0})$ at position zero, so that there is no conflict with RRH $0$ during the next period.
 Hence, it has used the position zero during $x = P - \omega(u_{l+1},u_{0}) - l*ET - \omega(u_0,u_{l+1}) = P - l*ET - RS$. From time $P - \omega(u_{l+1},u_{0}) + 2$, the $(l+1)$th RRH emits at position $2$ and during a time $ET - x$. Then the next RRH in the order is assigned to position $2$, and begins to emit at time $P - \omega(u_{l+1},u_{0}) + ET -x$ instead of zero. The rest of the assignment is built in the same way filling completely all first positions, until there are no more RRH.  
\end{proof}

\begin{center}
    \includegraphics[scale=0.7]{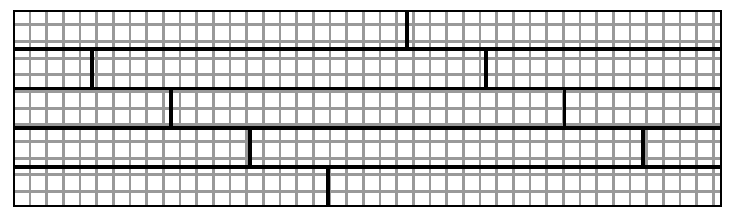}
     \captionof{figure}{Valid assignment for $9$ antennas and the N-GREEN parameters.}   \label{fig:split}
     \end{center}

Fig.~\ref{fig:split} illustrates the construction of Prop.~\ref{prop:filling} for the N-GREEN parameters. The loss due to reservation is exactly $RS$ containers by used positions. Hence, it is possible to support $9$ antennas (but no BE traffic in this extreme case),
rather than $5$ with the method of Sec.~\ref{sec:zerolatency}.

 We call this new reservation algorithm \textbf{saturating positions} since it improves on compacting positions of the previous subsection. Moreover, there are no free slots in used positions, hence the idea of balancing into the period is not relevant. The only possible optimisation would be to balance the used positions, but it is not worth it since it adds additional latency for the RRHs using two different positions.

  \begin{center}
  \includegraphics[scale=0.3]{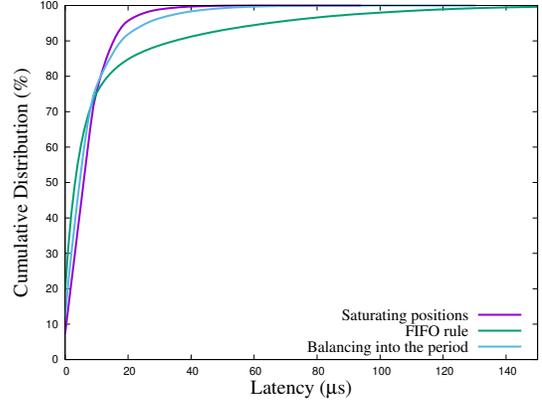}
     \captionof{figure}{Latencies of saturating positions, balancing into the period and FIFO rule for 5 antennas.}   \label{fig:splitres}
     \end{center}

%

Fig.~\ref{fig:splitres} represents the cumulative distribution of the latency of BE traffic for the FIFO rule, saturating position, and balancing into the period using the N-GREEN parameters. Saturating positions reduces the BE traffic latency more than balancing into the period. This is easily explained by its lesser use of reservation. It is at the cost of a maximal latency of $2$ $\mu$s for C-RAN traffic, so the designer can chose to use any of the two algorithms, according to what latency must be guaranteed to C-RAN or BE traffic.

  \section*{Conclusion}
  
  As a conclusion, we want to stress the fact that to deal with a deterministic  dataflow as C-RAN, we must use a deterministic policy rather than a classical stochastic one.
  By using a simple practical SDN scheme, which requires only to set the emission timing of the RRHs and to allow reservation on the optical ring, we remove all logical latencies. It also improves significantly the latencies of the BE traffic by spreading the load of the C-RAN traffic uniformly over the period.  We are currently working on a prototype implementing this method on the NGREEN ring.
  We also plan to deal with the case of several data-centers containing the BBUs instead of one. The algorithmic methods to find good emission timings in this generalization are more complicated and similar to what was proposed in~\cite{dominique2018deterministic}, but while the load due to the C-RAN traffic is not too high it should work very well. 
  The results obtained show that te N-GREEN network architecture has a high potential in term of minimization of the end-to-end latency, in a multi-QoS environment. This study complete several studies demonstrating that the broadcast and select mechanism is extremely powerful to lead to deterministic networks, since it minimize the latency to its minimum feasible.
  
      \bibliographystyle{splncs04}
\bibliography{src}

\begin{thebibliography}{10}
\providecommand{\url}[1]{\texttt{#1}}
\providecommand{\urlprefix}{URL }
\providecommand{\doi}[1]{https://doi.org/#1}

\bibitem{webpage}
Yann strozecki's website, http://www.prism.uvsq.fr/\%7Eystr/textesmaths.html

\bibitem{3gpp5g}
3GPP: 3rd Generation Partnership Project; Technical Specification Group
  Services and System Aspects; Service requirements for the 5G system;, stage 1
  (Release 16)

\bibitem{Cast1810:Performance}
{Ait el mahjoub}, Y., Castel-Taleb, H., Fourneau, J.M.: Performance and energy
  efficiency analysis in {NGREEN} optical network. In: 2018 14th International
  Conference on Wireless and Mobile Computing, Networking and Communications
  (WiMob) (WiMob 2018). Limassol, Cyprus (Oct 2018)

\bibitem{dominique2018deterministic}
Barth, D., Guiraud, M., Leclerc, B., Marce, O., Strozecki, Y.: Deterministic
  scheduling of periodic messages for cloud {RAN}. In: (ICT 2018). Saint Malo,
  France (Jun 2018)

\bibitem{boccardi2014five}
Boccardi, F., Heath, R.W., Lozano, A., Marzetta, T.L., Popovski, P.: Five
  disruptive technology directions for {5G}. IEEE Communications Magazine
  \textbf{52}(2),  74--80 (2014)

\bibitem{cadere2010virtual}
Cad{\'e}r{\'e}, C., Izri, N., Barth, D., Fourneau, J.M., Marinca, D., Vial, S.:
  Virtual circuit allocation with qos guarantees in the ecoframe optical ring.
  In: ONDM 2010 14th Conference on. IEEE

\bibitem{ngreenarchitecture}
Chiaroni, D.: Network energy: Problematic and solutions towards sustainable
  {ICT}. invited paper, International Commission of Optics (ICO-24)  (August
  2017)

\bibitem{finn-detnet-architecture-08}
Finn, N., Thubert, P.: {Deterministic Networking Architecture}. Internet-Draft
  draft-finn-detnet-architecture-08, Internet Engineering Task Force (2016),
  \url{https://tools.ietf.org/html/draft-finn-detnet-architecture-08}, work in
  Progress

\bibitem{jiang2016network}
Jiang, M., Condoluci, M., Mahmoodi, T.: Network slicing management \&
  prioritization in 5g mobile systems. In: European wireless. pp.~1--6 (2016)

\bibitem{kern2006applying}
Kern, A., Somogyi, G., Cinkler, T.: Applying statistical multiplexing and
  traffic grooming in optical networks jointly  (2006)

\bibitem{mobile2011c}
Mobile, C.: C-{RAN}: the road towards green {RAN}. White Paper, ver  (2011)

\bibitem{pizzinat2015things}
Pizzinat, A., Chanclou, P., Saliou, F., Diallo, T.: Things you should know
  about fronthaul. Journal of Lightwave Technology  \textbf{33}(5),  1077--1083
  (2015)

\bibitem{siller1996sonet}
Siller, C.A., Shafi, M.: SONET/SDH: a sourcebook of synchronous networking.
  Wiley-IEEE Press (1996)

\bibitem{szymanski2016ultra}
Szymanski, T.H.: An ultra-low-latency guaranteed-rate internet for cloud
  services. IEEE/ACM Transactions on Networking  \textbf{24}(1),  123--136
  (2016)

\bibitem{tayq2017real}
Tayq, Z., Neto, L.A., Le~Guyader, B., De~Lannoy, A., Chouaref, M.,
  Aupetit-Berthelemot, C., Anjanappa, M.N., Nguyen, S., Chowdhury, K.,
  Chanclou, P.: Real time demonstration of the transport of ethernet fronthaul
  based on vran in optical access networks. In: OFC 2017. IEEE

\bibitem{uscumlic2018scalable}
Uscumlic, B., Chiaroni, D., Leclerc, B., Zami, T., Gravey, A., Gravey, P.,
  Morvan, M., Barth, D., Amar, D.: Scalable deterministic scheduling for {WDM}
  slot switching xhaul with zero-jitter. In: ONDM 2018. IEEE

\bibitem{zukerman1990dqdb}
Zukerman, M., Potter, P.G.: The dqdb protocol and its performance under
  overload traffic conditions. Computer Networks and ISDN Systems
  \textbf{20}(1-5),  261--270 (1990)

\end{thebibliography}

\end{document}